\documentclass[11pt,twoside,fleqn]{article}
\usepackage{color,colortbl,graphicx,setspace,multirow,enumitem}

\usepackage[utf8]{inputenc}

\usepackage{geometry}
\usepackage[hang,flushmargin]{footmisc}
\usepackage{natbib} % citations
\usepackage{bibunits}
\defaultbibliography{references}
\defaultbibliographystyle{custom}

\usepackage{amsfonts}
\usepackage{amsmath}
\usepackage{amsthm}
\usepackage{amssymb}
\usepackage{algorithm}
\floatstyle{plaintop}
\restylefloat{algorithm}
\usepackage{placeins}

\newtheorem{proposition}{Proposition}

\usepackage{booktabs,threeparttable} % tables

\usepackage[dvipsnames]{xcolor}
\definecolor{nblue}{HTML}{000660}
\usepackage[colorlinks=true,urlcolor=nblue,linkcolor=nblue,citecolor=nblue,hypertexnames=false]{hyperref}
\newcommand{\inlinehead}[1]{{\textbf{#1}}. }

\newcommand{\diag}{\text{diag}}
\newcommand{\bdiag}{\text{bdiag}}

\usepackage[title,titletoc]{appendix}

\usepackage{titlesec} % Allows customization of titles
\titleformat{\section}[block]{\bfseries\sffamily\large}{\thesection. }{0em}{\MakeUppercase} % Change the look of the section titles
\titleformat{\subsection}[block]{\bfseries\sffamily\large}{\thesubsection. }{0em}{} % Change the look of the section titles
\titleformat{\subsubsection}[block]{\large\sffamily}{}{0em}{\itshape} % Change the look of the section titles

\usepackage{bm}
\usepackage{caption}
\usepackage{subcaption}
\newcommand\cond[1][]{\:#1\vert\:}

\makeatletter\let\p@subfigure\thefigure\makeatother
\newcommand\fnote[1]{\captionsetup{font=small}\caption*{#1}}

\usepackage{fancyhdr} % Headers and footers
\def\titletext{Conditional projection methods for large-scale Bayesian VARs}

\title{\sffamily\Huge{\textbf{\titletext}}}
\author{}
\date{}

\begin{document}

\maketitle
\vspace*{-4.5em} 
\normalsize
\begin{center}
\begin{minipage}{.45\textwidth}
 \large\centering Niko \MakeUppercase{Hauzenberger}\\[0.25em]
 \small University of Strathclyde
\end{minipage}
\begin{minipage}{.45\textwidth}
 \large\centering Michael \MakeUppercase{Pfarrhofer}\\[0.25em]
 \small WU Vienna
\end{minipage}
\end{center}

\vspace*{1em}
\doublespacing
\begin{center}
\begin{minipage}{0.85\textwidth}
\noindent\small We develop fast methods for conditional forecasting and structural scenario analysis with high-dimensional Bayesian vector autoregressions (VARs). Our general framework features a factor structure on the reduced-form errors, which enables fast and order-invariant equation-by-equation estimation; suitably identified factors admit a structural interpretation. The scenarios are defined through separate distributional restrictions on observables, structural shocks and idiosyncratic components. The computational cost of our proposed algorithm is cubic only in the number of restrictions, while the dimension of the forecasted system enters linearly. In our application with $33$ macroeconomic and financial variables and ten set-identified structural shocks for the US, we compute counterfactual predictions for oil price scenarios in the context of the 2026 closure of the Strait of Hormuz. The same oil price path is consistent with outcomes ranging from a mostly benign episode to pronounced stagflation, depending on which structural and idiosyncratic shocks are allowed to deliver it.

\textbf{\sffamily JEL}: C11, C32, C53, E37, Q43\\
\textbf{\sffamily Keywords}: structural scenario analysis, conditional forecasting, factor model, shrinkage
\end{minipage}
\end{center}

% acknowledgements
\singlespacing\vfill\noindent{\footnotesize\textit{Contact}: \href{mailto:michael.pfarrhofer@wu.ac.at}{michael.pfarrhofer@wu.ac.at}, Department of Economics, WU Vienna University of Economics and Business. Codes and replication files: \href{https://github.com/mpfarrho/ssa-var-fm}{github.com/mpfarrho/ssa-var-fm}. We used Claude Code and Codex for drafting, editing and brainstorming. We reviewed and edited the content as needed and take full responsibility for the final content and accuracy of this paper.}

\thispagestyle{empty}\renewcommand{\footnotelayout}{\setstretch{1.5}}\newpage
\doublespacing\normalsize\renewcommand{\thepage}{\arabic{page}}

\begin{bibunit}
\section{Introduction}\label{sec:introduction}
Conditional projection methods have been a key component of macroeconometrics since at least \citet{doan1984forecasting} and underpin the modern practice of counterfactual policy and scenario analysis \citep[see, e.g.,][]{crump2021large,adrian2026risks}. Fully specified structural models commit the analysis to a single model by design and often restrict the size of the usable information set. An alternative is to use structural VARs, in the spirit of the structural scenario analysis (SSA) of \citet{antolin2021structural}, in which scenarios are implemented as distributional restrictions on observables and structural shocks. Restricting subsets of the latter to their unconditional distribution defines the structural sources of the scenario, attributing it to identified economic forces.\footnote{The Lucas-robust policy counterfactuals of \citet{mckay2023can} can be obtained as a special case of SSA \citep[see also][]{breitenlechner2024fiscal}; we return to this point in Section \ref{sec:application}.} Our paper contributes to this literature by developing a high-dimensional Bayesian VAR for (semi-)structural analysis to obtain conditional predictions with fast estimation and simulation algorithms.

Rich information sets have proven successful in both forecasting and structural settings when combined with appropriate regularization \citep[see][among others]{banbura2010large}. Our framework is a BVAR with a global--local shrinkage prior on the coefficients and a factor structure on the reduced-form errors. This setup serves two purposes. First, the prior and factor structure enable regularization alongside quick and order-invariant estimation of VARs with a large cross section of $n$ endogenous variables and long lag orders \citep[see][]{kastner2020sparse}.\footnote{The latter can reduce bias in dynamic multiplier estimates at longer horizons \citep[see the local projection versus VAR discussion, e.g., in][]{Baumeister2025comment}, which matters here because our conditional projections are built from the same moving-average matrices as impulse response functions (IRFs).} Second, following \citet{korobilis2022new}, a small number $q\,(\ll n)$ of ``primitive'' shocks --- the factors --- drive the common dynamics, while cross-sectionally independent components capture idiosyncratic shocks; when the factors are orthogonal and identified, they admit an interpretation as structural economic shocks.\footnote{\citet{arias2026large} and \citet{chan2025largesign} discuss related high-dimensional SVAR methods for unrestricted covariance matrices, without the reduced-rank plus idiosyncratic structure. When the number of factors is equal to the number of observables ($q = n$), our framework is close to a conventional SVAR; irrespective of the underlying model, the conditional projection algorithm we develop is directly applicable. The factor assumption about the reduced-form errors is used, e.g., in \citet{banbura2026drives,chan2022large,gambetti2023agreed,pruser2024large}.}

Earlier conditional projection algorithms \citep[in the spirit of][]{waggoner1999conditional} can be computationally intensive and thus are often limited to smaller dynamic systems; scalability to richer information sets is possible through filtering-based methods \citep[see][]{banbura2015conditional} or the closed-form solutions of \citet{antolin2021structural} and \citet{chan2023conditional} for joint predictive distributions. We build on the latter two, with an extension to separate restrictions on observables, structural shocks and idiosyncratic components. Further, we derive a fast and exact algorithm whose cost is cubic only in the number of restrictions, while the size of the forecasted system enters only linearly, which offers speed improvements relative to earlier implementations.

We make three contributions: (i) a framework with separate distributional restrictions on observables, structural shocks, and idiosyncratic components; (ii) an exact and computationally fast sampler; and (iii) an empirical demonstration that the structural attribution of a scenario can drive vastly different counterfactual predictions. We illustrate these aspects using a quarterly dataset of $33$ US macroeconomic and financial variables, in which sign restrictions set-identify ten structural shocks. The scenarios are taken from \citet{kilian2026impact} and impose a path for the nominal price of oil in the context of the 2026 Iran War and the closure of the Strait of Hormuz. Successively adding restrictions on the idiosyncratic components and non-driving structural shocks, we trace how the shocks required to deliver the same oil price path change and how this choice shapes the counterfactual predictions for the unrestricted observables. The macroeconomic implications differ markedly. The same scenario is consistent with a benign, mildly inflationary episode when its structural origins are unrestricted, and with pronounced stagflation alongside substantially wider predictive intervals when it is attributed to the structural oil price shock alone.

\section{Econometric Framework}\label{sec:econometrics}
Consider a reduced-form VAR for $n$ variables $\bm{y}_t = (y_{1t},\hdots,y_{nt})'$ with $P$ lags in $\bm{x}_t = (\bm{y}_{t-1}',\hdots,\bm{y}_{t-P}')'$:
\begin{equation}
    \bm{y}_t = \bm{a} + \sum_{p=1}^P \bm{A}_p \bm{y}_{t-p} + \bm{\eta}_t = \bm{a} + \bm{A}\bm{x}_t + \bm{\eta}_t, \quad \bm{\eta}_t \sim \mathcal{N}(\bm{0}_n,\bm{\Sigma});\label{eq:model-main}
\end{equation}
$\bm{a}$ is an intercept, $\bm{A} = [\bm{A}_1,\hdots,\bm{A}_P]$ comprises the dynamic coefficient matrices $\bm{A}_p$, and $\bm{\eta}_t$ is a zero-mean reduced-form error with covariance matrix $\bm{\Sigma}$. 

Since an infinite number of structural models is consistent with a reduced-form representation as in (\ref{eq:model-main}), structural economic interpretations require further restrictions, which are commonly imposed via decompositions of the covariance matrix $\bm{\Sigma}$. We use a factor model for the reduced-form error, because this specification is computationally attractive and encompasses a range of possible avenues for structural identification:
\begin{equation}
    \bm{\eta}_t = \bm{L}\bm{\varepsilon}_t + \bm{S}^{1/2}\bm{u}_t, \quad (\bm{\varepsilon}_t', \bm{u}_t')'\overset{\text{iid}}{\sim}\mathcal{N}(\bm{0}_{q+n},\bm{I}_{q+n}),\label{eq:model-error}
\end{equation}
where $\bm{L}$ is an $n \times q$-matrix of loadings associated with $q$ factors $\bm{\varepsilon}_t$, and $\bm{S} = \diag(s_1^2,\hdots,s_n^2) = \bm{S}^{1/2}\bm{S}^{1/2}$ is a diagonal matrix collecting the variances of the idiosyncratic component. This yields the decomposition $\bm{\Sigma} = \bm{L}\bm{L}' + \bm{S}$ \citep[as in][]{anderson1956statistical}. Any co-movement in the reduced-form errors is due to the common factors $\bm{\varepsilon}_t$ while the idiosyncratic component is cross-sectionally uncorrelated. For $n = q$, $\bm{S} = \bm{0}_{n\times n}$, and $\bm{L}$ invertible, premultiplying by $\bm{A}_0 \equiv \bm{L}^{-1}$ results in the standard SVAR $\bm{A}_0\bm{y}_t = \bm{A}_0(\bm{a} + \bm{A}\bm{x}_t) + \bm{\varepsilon}_t$ and $\bm{\varepsilon}_t$ admits a structural interpretation.

In canonical SVARs, there are as many shocks as observables. Factor models are a dimension-reduction technique with $q \ll n$, based on the idea that only a few economic driving forces are responsible for the majority of the observed business cycle fluctuations. In this case, assuming $\bm{L}$ has full column rank, obtaining an approximate structural form of the model involves the pseudoinverse $\bm{A}_0 = (\bm{L}'\bm{L})^{-1}\bm{L}'$, to yield $\bm{A}_0\bm{y}_t = \bm{A}_0(\bm{a} + \bm{A}\bm{x}_t) + \bm{\varepsilon}_t + (\bm{L}'\bm{L})^{-1}\bm{L}'\bm{S}^{1/2}\bm{u}_t$. Invoking a central limit argument, \citet{korobilis2022new} argues that the last term vanishes asymptotically for each $t$ as $n\rightarrow\infty$, and the interpretation of the factors as structural shocks follows. 

\inlinehead{Identifying assumptions} It is well known that for $q \leq n$, the decomposition $\bm{\Sigma} = \bm{L}\bm{L}' + \bm{S}$ is in general not unique; multiple pairs $(\bm{L}, \bm{S})$ can be consistent with the same reduced-form. More generally, even if $\bm{L}\bm{L}'$ is uniquely identified, $\bm{L}$ itself is not, since for any $q\times q$ orthogonal matrix $\bm{Q}$, the matrix $\tilde{\bm{L}} = \bm{L}\bm{Q}$ yields the same product $\bm{L}\bm{L}' = \tilde{\bm{L}}\tilde{\bm{L}}'$. A solution to the latter issue is to impose zero restrictions directly on $\bm{L}$. This can achieve point identification at the cost of potentially arbitrary exclusion restrictions; by contrast, imposing sign restrictions leads to set-identification. Indeed, there is a long-standing tradition to set-identify structural models via sensible sign and zero restrictions \citep[see][]{baumeister2015sign,arias2018inference} which we adopt.

\inlinehead{Prior choices and Bayesian estimation} Before turning to the conditional predictions we briefly summarize prior choices and the MCMC algorithm. We mostly follow earlier papers here, and details about the posteriors and the standard Gibbs sampling algorithm are provided in Appendix \ref{app:technical}. The number of parameters in the conditional mean is $k + n$ where $k = n^2P$, which quickly exceeds $T$ even for moderate $n$. This typically results in overfitting if estimated without any regularization, and can be computationally slow. We follow \citet{huber2019adaptive} and shrink via a global--local (horseshoe) prior; for $nP \gg T$, fast sampling algorithms are used. Moreover, full-system estimation is often computationally infeasible, which is why recent approaches commonly rely on equation-by-equation estimation \citep[see][]{carriero2022corrigendum}. The factor structure on the reduced-form errors in (\ref{eq:model-error}) enables fast and order-invariant equation-by-equation estimation, since (\ref{eq:model-main}) reduces to $n$ independent equations conditional on the factors. To impose the sign restrictions on $\bm{L}$, following \citet{korobilis2022new}, each loading receives a pre-specified restriction $\mathcal{R}_{ij} \in \{+, -, c_{ij}, \text{NA}\}$, collected in the $n\times q$-matrix $\bm{\mathcal{R}}$: sign restrictions ($+$, $-$) are imposed via truncated normal priors on the respective loadings (i.e., informative priors in the spirit of \citealp{baumeister2015sign}), constants $c_{ij} \in \mathbb{R}$ via point masses (with $c_{ij} = 0$ yielding zero restrictions), and unrestricted loadings ($\text{NA}$) receive normal priors.

This baseline approach can be combined with additional identifying restrictions and/or model features.\footnote{\citet{chan2022large} use factor stochastic volatility to exploit heteroskedasticity for point identification; \citet{pruser2024large} relies on a similar framework and uses non-Gaussian features of the structural shocks for identification. Another option is to replace the normal prior on the structural shocks in (\ref{eq:model-error}) with, e.g., a truncated normal prior for specific shocks in specific periods \citep[see, e.g.,][]{berend2025large}. External and internal instruments may be introduced by considering these as observed or partially latent factors \citep[see, e.g.,][]{korobilis2025exploring,pfarrhofer2025high}. } We abstract from these options, noting that most such extensions require only minor adjustments for use with our proposed SSA algorithm.

\subsection{Structural Scenario Analysis}
The joint vector of forecasts is $\bm{y}_{T+(1:H)} = (\bm{y}_{T+1}',\hdots,\bm{y}_{T+H}')'$, where $H$ refers to the maximum forecast horizon. In the same vein, we stack $\bm{\varepsilon}_{T+(1:H)} = (\bm{\varepsilon}_{T+1}',\hdots,\bm{\varepsilon}_{T+H}')'$ and idiosyncratic errors $\bm{u}_{T+(1:H)} = (\bm{u}_{T+1}',\hdots,\bm{u}_{T+H}')'$ in a joint vector $\bm{\xi}_{T+(1:H)} = (\bm{\varepsilon}_{T+(1:H)}',\bm{u}_{T+(1:H)}')' \sim\mathcal{N}(\bm{0},\bm{I}_{d})$ with $d = (q+n)H$, which is standard normal based on (\ref{eq:model-error}). Precise definitions and additional details about the vectors and matrices below are provided in Appendix \ref{app:technical}. Using $\bm{\mathcal{L}}_{(H)} = (\bm{I}_H \otimes \bm{L})$, $\bm{\mathcal{S}}^{1/2}_{(H)} = (\bm{I}_H \otimes \bm{S}^{1/2})$ and $\bm{\mathcal{V}} = [\bm{\mathcal{L}}_{(H)}, \bm{\mathcal{S}}_{(H)}^{1/2}]$ of size $nH \times d$, this allows us to write: $\bm{H}\bm{y}_{T+(1:H)} = \bm{h} + \bm{\mathcal{L}}_{(H)}\bm{\varepsilon}_{T+(1:H)} + \bm{\mathcal{S}}^{1/2}_{(H)}\bm{u}_{T+(1:H)} = \bm{h} + \bm{\mathcal{V}}\bm{\xi}_{T+(1:H)}$. Thus:
\begin{equation}
    \bm{y}_{T+(1:H)} = \bm{H}^{-1}\bm{h} + \bm{H}^{-1}\bm{\mathcal{V}}\bm{\xi}_{T+(1:H)},\label{eq:jointfcdist}
\end{equation}
where we sometimes also write $\bm{m} = \bm{H}^{-1}\bm{h}$ and $\bm{M}' = \bm{H}^{-1}\bm{\mathcal{V}}$, so $\bm{y}_{T+(1:H)}\sim\mathcal{N}(\bm{m}, \bm{M}'\bm{M})$. The vector $\bm{h}$ collects initial conditions and any deterministic terms, and $\bm{H}$ encodes the dynamic VAR coefficients in a banded structure. The inverse $\bm{H}^{-1}$ by construction then collects the coefficient matrices of the vector moving average (VMA) representation of the underlying VAR. 

Specifically, these VMA coefficient matrices are $\bm{\Phi}_j = \bm{J}\bm{\mathcal{A}}^{j}\bm{J}'$ with 
$\bm{\Phi}_0 = \bm{I}_{n}$, defined using $\bm{J} = [\bm{I}_n,\bm{0}_{n\times n(P-1)}]$ 
and the companion form $\bm{\mathcal{A}} = [\bm{A}; \bm{I}_{n(P-1)}, \bm{0}_{n(P-1)\times n}]$. $\bm{H}^{-1}$ features copies of $\bm{\Phi}_0$ on its main diagonal, and the $h$th subdiagonal for $h = 1,2,\hdots,H-1,$ consists of copies of $\bm{\Phi}_h$. Intuitively, consider the IRF of the endogenous variables to a structural shock. The IRF to a structural shock in $\bm{\varepsilon}_{T+1}$ is defined as $\partial \bm{y}_{T+h+1} / \partial \bm{\varepsilon}_{T+1}' =  \bm{\Phi}_h\bm{L}$ for $h = 0, 1, \hdots, H - 1$, with the timing convention that the impact response $h = 0$ refers to period $T+1$. The subdiagonal structure of $\bm{H}^{-1}$ propagates forward any vector it multiplies according to the dynamics of the model, e.g., structural shock impacts at each horizon, via $\bm{H}^{-1}\bm{\mathcal{L}}_{(H)}$ in (\ref{eq:jointfcdist}). The example showcases that IRFs and the associated VMA matrices are the fundamental building blocks of any conditional projection and scenario analysis, which can be implemented via specific restrictions, to which we turn next.

Stochastic restrictions as in \citet{andersson2010density} are imposed on \textit{observed variables}, where there are $r_y$ restrictions, and $\bm{R}^{(y)}$ is $r_y \times nH$:
\begin{equation}
    \bm{R}^{(y)}\bm{y}_{T+(1:H)} \sim \mathcal{N}(\bm{r}^{(y)}, \bm{\Omega}^{(y)}),\label{eq:rest_y}
\end{equation}
on the \textit{structural shocks}: $\bm{R}^{(\varepsilon)}\bm{\varepsilon}_{T+(1:H)} \sim \mathcal{N}(\bm{r}^{(\varepsilon)},\bm{\Omega}^{(\varepsilon)})$ with $r_{\varepsilon}$ restrictions and $\bm{R}^{(\varepsilon)}$ is $r_{\varepsilon} \times qH$; and, generalizing earlier papers, on the \textit{idiosyncratic component}: $\bm{R}^{(u)}\bm{u}_{T+(1:H)} \sim \mathcal{N}(\bm{r}^{(u)},\bm{\Omega}^{(u)})$ with $r_{u}$ restrictions and $\bm{R}^{(u)}$ is $r_{u} \times nH$. The latter are stacked using $\bm{R}^{(\xi)} = \bdiag(\bm{R}^{(\varepsilon)}, \bm{R}^{(u)})$, $\bm{r}^{(\xi)} = [\bm{r}^{(\varepsilon)}; \bm{r}^{(u)}]$ and $\bm{\Omega}^{(\xi)} = \bdiag(\bm{\Omega}^{(\varepsilon)},\bm{\Omega}^{(u)})$, and written as:
    \begin{equation}
        \bm{R}^{(\xi)}\bm{\xi}_{T+(1:H)} \sim \mathcal{N}(\bm{r}^{(\xi)}, \bm{\Omega}^{(\xi)}).\label{eq:rest_xi}
    \end{equation}
Plugging (\ref{eq:jointfcdist}) into (\ref{eq:rest_y}) yields $\bm{R}^{(y)}\left(\bm{m} + \bm{M}'\bm{\xi}_{T+(1:H)}\right) \sim \mathcal{N}(\bm{r}^{(y)}, \bm{\Omega}^{(y)})$; combined with (\ref{eq:rest_xi}) we have:
\begin{equation*}
    \underbrace{\begin{bmatrix}
        \bm{R}^{(y)}\bm{M}'\\
        \bm{R}^{(\xi)}
    \end{bmatrix}}_{\bm{R}}
    \bm{\xi}_{T+(1:H)} \sim \mathcal{N}\big(
    \underbrace{\begin{bmatrix}
       \bm{r}^{(y)} - \bm{R}^{(y)}\bm{m}\\
       \bm{r}^{(\xi)}
    \end{bmatrix}}_{\bm{r}}, 
    \underbrace{\begin{bmatrix}
        \bm{\Omega}^{(y)} &\\
        & \bm{\Omega}^{(\xi)}
    \end{bmatrix}}_{\bm{\Omega}}
    \big).
\end{equation*}
There are a total of $r = r_y + r_\varepsilon + r_u$ restrictions, $\bm{R}$ is $r \times (n+q)H$ and assumed to have full row rank (no redundancies), $\bm{r}$ is $r \times 1$ and $\bm{\Omega}$ is $r \times r$. These restrictions specify the distribution that $\bm{R}\bm{\xi}_{T+(1:H)}$ is required to have under the conditional forecast distribution. 

The restricted distribution of the innovations is the revision of their unconditional standard normal distribution that satisfies the restrictions while minimizing the Kullback--Leibler divergence to the unconditional distribution; \citet{antolin2021structural} prove that in the Gaussian case this is equivalent to entropic tilting as in \citet{robertson2005forecasting}. This revision replaces the implied marginal of $\bm{R}\bm{\xi}_{T+(1:H)}$ with $\mathcal{N}(\bm{r},\bm{\Omega})$, while leaving the conditional $\bm{\xi}_{T+(1:H)}$ given $\bm{R}\bm{\xi}_{T+(1:H)}$ unchanged. This yields the restricted shocks $\bm{\xi}_{T+(1:H)}^{\ast}\sim\mathcal{N}(\bm{\mu}_\xi, \bm{V}_\xi)$, with moments:
\begin{equation}
    \bm{V}_\xi = (\bm{I}_{d} - \bm{R}^{+}\bm{R}) + \bm{R}^{+}\bm{\Omega}\bm{R}^{+\prime}, \quad \bm{\mu}_\xi = \bm{R}^{+}\bm{r},\label{eq:restrmoments}
\end{equation}
where $\bm{R}^{+} = \bm{R}'(\bm{R}\bm{R}')^{-1}$ denotes the Moore--Penrose inverse of $\bm{R}$. This approach draws the full sequence of innovations $\bm{\xi}_{T+(1:H)}^{\ast}$ compatible with the restrictions. Inserting the restricted $\bm{\xi}_{T+(1:H)}^{\ast}$ into (\ref{eq:jointfcdist}) yields the final conditional projection, a draw from $\bm{y}_{T+(1:H)} \cond \bm{R},\bm{r},\bm{\Omega} \sim \mathcal{N}(\bm{\mu}_y, \bm{V}_y)$ with $\bm{\mu}_y = \bm{m} + \bm{M}'\bm{\mu}_{\xi}$ and $\bm{V}_y = \bm{M}'\bm{V}_{\xi}\bm{M}$ in terms of the observables.\footnote{These moments coincide with \citet{antolin2021structural}; Appendix \ref{app:technical} provides derivations, alongside an alternative that interprets the restrictions as noisy measurements of $\bm{R}\bm{\xi}_{T+(1:H)}$ and updates via Bayes' rule (combining rather than replacing the model-based forecast), which coincides with the above for $\bm{\Omega} = \bm{0}$ but differs otherwise.}

Draws from the distribution of $\bm{\xi}_{T+(1:H)}^{\ast}$ can be obtained quickly via Algorithm \ref{alg:xi_post}; we drop the time subscripts for notational convenience. The algorithm is related to \citet[][their Algorithm 2]{cong2017fast} and draws from the desired distribution, with $\mathbb{E}(\bm{\xi}^{\ast}) = \bm{\mu}_\xi$ and $\mathrm{Var}(\bm{\xi}^{\ast}) = \bm{V}_\xi$. Based on the auxiliary draws $\bm{\xi}_{0} \sim \mathcal{N}(\bm{0}_{d},\bm{I}_{d})$ and $\bm{v}_0 \sim \mathcal{N}(\bm{0}_r,\bm{\Omega})$, the restrictions hold by construction, $\bm{R}\bm{\xi}^{\ast} = (\bm{r} + \bm{v}_0)\sim\mathcal{N}(\bm{r},\bm{\Omega})$; hard constraints of the form $\bm{R}\bm{\xi}_{T+(1:H)} = \bm{r}$ are nested trivially for $\bm{v}_0 = \bm{0}_r$. More formal statements and a proof are provided in Appendix \ref{app:technical}.

\begin{algorithm}[ht]
\caption{Conditional forecast sampling algorithm.}\label{alg:xi_post}
\begin{enumerate}[label=(\arabic*)]
    \item Sample $\bm{\xi}_{0} \sim \mathcal{N}(\bm{0}_{d},\bm{I}_{d})$ and $\bm{v}_0 \sim \mathcal{N}(\bm{0}_r,\bm{\Omega})$ independently;
    \item Return $\bm{\xi}^{\ast} = \bm{\xi}_{0} + \bm{R}'(\bm{R}\bm{R}')^{-1}(\bm{r} + \bm{v}_0 - \bm{R}\bm{\xi}_{0})$, using
    \begin{itemize}[leftmargin = *,label={--}]
        \item Solve for $\bm{x}$ from $(\bm{R}\bm{R}')\bm{x} = (\bm{r} + \bm{v}_0 - \bm{R}\bm{\xi}_{0})$;
        \item Return $\bm{\xi}^{\ast} = \bm{\xi}_{0} + \bm{R}'\bm{x}$.
    \end{itemize}
\end{enumerate}
\end{algorithm}

Algorithm~\ref{alg:xi_post} factorizes only the $r \times r$ matrix $\bm{R}\bm{R}'$, so the cost of each draw is cubic only in the number of restrictions, while the size of the forecasted system enters linearly ($\mathcal{O}(r^2nH + r^3)$). By contrast, precision-based samplers factorize the complementary block of dimension $nH - r$ and gain the advantage once most of the system is restricted, while approaches that sample from the full $d$-dimensional Gaussian (or the $nH$-dimensional distribution of the observables) are never the fastest option in our comparisons and remain within a small factor of it only in small systems. In typical conditioning exercises with sparse to moderately dense restrictions, our algorithm is thus the fastest option, with an advantage that widens with the dimension of the stacked forecast vector; a detailed comparison of computation times is provided in Appendix \ref{app:empirical}.

\section{Empirical Application}\label{sec:application}
We use a dataset patterned after \citet{crump2021large} in our empirical illustration. The quarterly dataset includes $n = 33$ series: $31$ macroeconomic and financial variables for the US, plus two auxiliary variables to impose ranking restrictions; after transformations, the data span 1965Q1 to 2025Q4. Linked to the maximum forecast horizon $H = 20$ (i.e., $5$ years until 2030Q4), we use $P = (H - 1) = 19$ lags, which leaves the lag polynomial unrestricted at every horizon of the forecast window, so that the effective estimation sample starts in 1969Q4.\footnote{We pre-screen each series for outliers: observations farther than five interquartile ranges from the central quartiles are treated as missing and imputed prior to estimation, inspired by similar procedures often used with dynamic factor models \citep[see, e.g.,][]{carriero2021addressing}. The results without outlier-adjustment are qualitatively similar. All series are then standardized to zero mean and unit variance; conditioning targets are mapped into standardized units accordingly, and all results are reported, using an inverse transformation after estimation, in the original units.} A similar dataset has been used in the applications of \citet{arias2026large,chan2025largesign}, from which we also take the sign restrictions to enable a structural interpretation. Specifically, we set-identify $q = 10$ shocks: demand, investment, financial, monetary policy, government spending, technology, labor supply, wage bargaining, oil price and consumer sentiment. The number of shocks $q$ is thus fixed by the adopted identification scheme. Detailed variable descriptions and sign restrictions, as well as the resulting IRFs for selected key indicators, are provided in Appendix \ref{app:empirical}. The estimated IRFs qualitatively match the responses reported in the cited papers.

\begin{figure}[htbp]
  \centering
  \includegraphics[width=\textwidth]{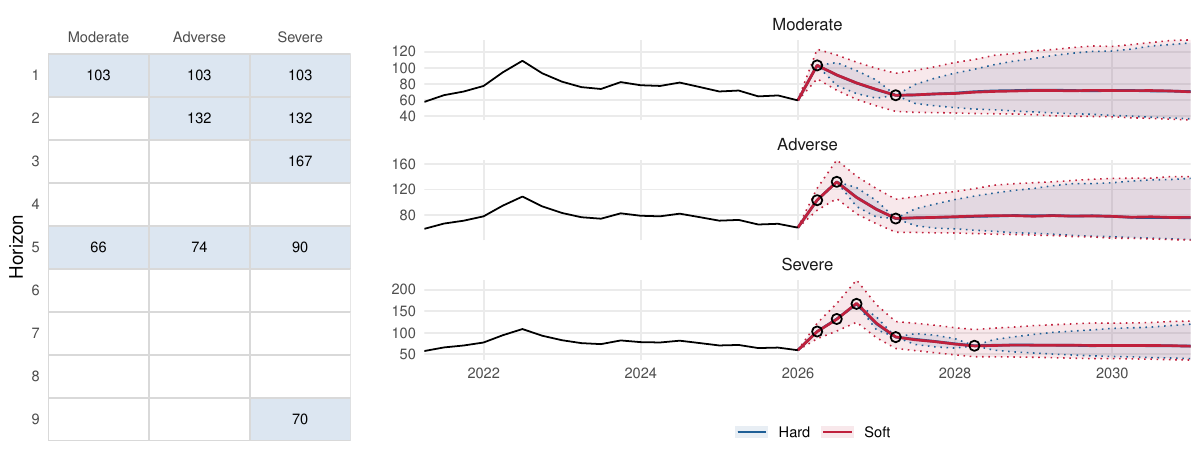}
  \caption{Oil-price scenarios. Left panel: imposed restriction targets (WTI, in USD per barrel) by forecast horizon; blank cells are unrestricted. Right panel: observed data (solid black line), restriction targets (black circles), and conditional forecast distributions of the oil price under hard and soft constraints (posterior medians alongside 68\% credible sets).}
  \label{fig:oilscenarios}
\end{figure}

We consider counterfactual scenarios from 2026Q1 onwards, based on assumptions about the nominal price of oil (WTI, in USD per barrel) in the context of the 2026 Iran War and the associated closure of the Strait of Hormuz. The \texttt{moderate}, \texttt{adverse} and \texttt{severe} scenarios are provided by \citet{kilian2026impact} based on a DSGE model described in that paper; they reflect closures of increasing duration, with higher and later peaks of the oil price and a later return to a scenario-specific baseline. The left panel of Figure \ref{fig:oilscenarios} lists the timing and numerical magnitudes of these restrictions; at all other horizons the oil price is unrestricted and thus model-determined.

The right panel of Figure \ref{fig:oilscenarios} shows the resulting predictive distributions when the targets are imposed as hard versus distributional (soft) constraints. For the soft constraints, we impose the restrictions on the mean but retain the model-implied variance, $\bm{\Omega}^{(y)} = \bm{R}^{(y)}\bm{M}'\bm{M}\bm{R}^{(y)\prime}$. For each parameter draw, the conditional means of the two versions coincide exactly, since $\bm{\mu}_\xi$ in (\ref{eq:restrmoments}) does not depend on $\bm{\Omega}$ so the tightness of the constraints matters only for the predictive uncertainty. In the empirical results that follow, we focus only on the distributional version, so that predictive uncertainty is reflected more adequately \citep[see also the discussions in][]{antolin2021structural}.

The results in Figures \ref{fig:eps-scenarios} and \ref{fig:cf-rest-types} refer to different types of restrictions on the shocks that we combine with the scenario restrictions on oil prices. In addition to the restrictions on observables (\texttt{obs}), which are identical across all conditioning schemes, our framework allows us to also restrict the structural (\texttt{struc}) and idiosyncratic (\texttt{idio}) shocks, which is consequential for the counterfactual paths of the unrestricted observables. For \texttt{idio}, we restrict all idiosyncratic components to follow their unconditional distribution, that is, the scenario must be delivered by the (unrestricted) structural shocks alone. For \texttt{struc}, we restrict all structural shocks but the oil price shock to their unconditional distribution, that is, all but that shock are ``non-driving'' shocks and only the structural oil price shock may deviate. Combining these options yields the four conditioning schemes labeled \texttt{obs}, \texttt{obs/idio}, \texttt{obs/struc} and \texttt{obs/idio/struc} in the figures.\footnote{Restricting shocks to their unconditional distribution amounts to normally distributed soft constraints with zero mean and unit covariance: for \texttt{idio}, $\bm{R}^{(u)} = \bm{I}_{nH}$; for \texttt{struc}, $\bm{R}^{(\varepsilon)}$ selects all but the oil price shock at every horizon, labeling the latter the driving shock and all others non-driving; in both cases $\bm{r} = \bm{0}$ and $\bm{\Omega} = \bm{I}$ of conformable dimension. Other (sets of) driving shocks can be designated by adjusting the dimensions of the restricted block.}

\begin{figure}[ht]
  \centering
  \includegraphics[width=\textwidth]{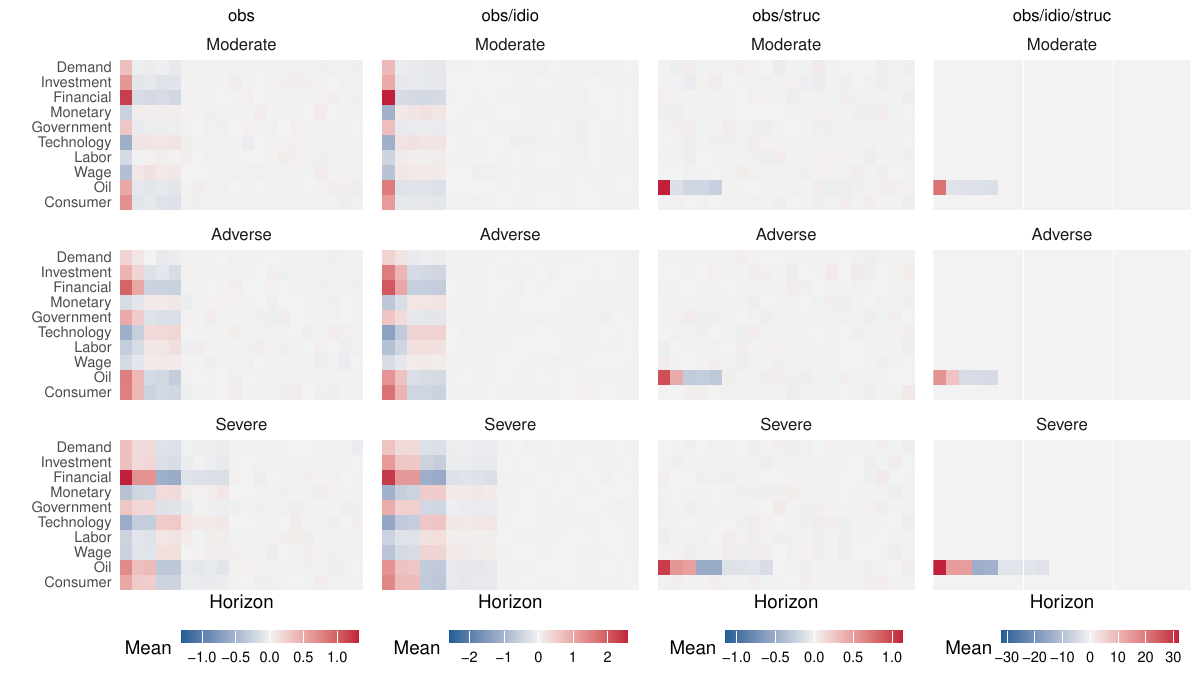}
  \caption{Posterior means of the counterfactual structural-shock distributions across forecast horizons, by conditioning scheme (column blocks) and scenario (rows within blocks).}
  \label{fig:eps-scenarios}
\end{figure}

Figure \ref{fig:eps-scenarios} shows the means of the counterfactual shock distributions across horizons under the four conditioning schemes, that is, the realizations of the structural shocks required to deliver the stipulated path of the observed oil price. As \citet{antolin2021structural} emphasize, restricting the structural origins of a scenario is pertinent: the required shocks differ markedly across conditioning schemes, both in their composition and in their magnitudes. By construction, if the structural shocks are left unrestricted (\texttt{obs}), we obtain the most likely combination of shocks under the model's unconditional shock distribution (the minimal Kullback--Leibler tilt) that generates the stipulated shift in nominal oil prices. Notably, the largest contribution on impact does not come from the oil price shock itself (mean deviations of $0.5$ to $0.8$ standard deviations, depending on the scenario), but from positive financial shocks (about $1$); negative technology shocks (around $-0.6$) also contribute meaningfully. After the oil-price-increasing impact, the signs of these deviations reverse in subsequent periods, providing the offsetting forces that restore the baseline; beyond the final restricted horizon, none of the structural shocks deviate meaningfully from their unconditional distribution. The deviations also increase in size and persistence with the severity of the scenario. Under \texttt{obs/idio}, the composition of the required shocks is virtually unchanged, but the deviations roughly double (e.g., up to more than $2$ standard deviations for the financial shock): with the idiosyncratic margin shut down, the structural shocks alone must deliver the scenario.

The picture changes fundamentally once the non-driving structural shocks are restricted. Under \texttt{obs/struc}, only the oil price shock deviates, with a moderate mean of about $1$ standard deviation on impact, while all other structural shocks are forced to their unconditional distribution (up to Monte Carlo noise). In this case, the bulk of the adjustment is absorbed by the idiosyncratic component of the oil price equation (a mean deviation of about $4$ standard deviations on impact; not shown in Figure \ref{fig:eps-scenarios}), whereas all other idiosyncratic components remain at their unconditional distribution. When both margins are closed (\texttt{obs/idio/struc}), the required magnitude of the structural oil price shock increases by an order of magnitude, offset by large negative realizations in subsequent quarters. Shock sequences of this size are no ``modest interventions'' in the wording of \citet{leeper2003modest}, and thus not robust to the Lucas critique. Only the special case of imposing the target for the first forecast period alone and attributing it to the structural oil price shock would amount to a scenario constructed purely from an identified shock, which is robust to the Lucas critique in the sense of \citet{mckay2023can}, under the assumption that the transmission of the identified shock is invariant to the intervention. A single shock, however, may fail to deliver the stipulated multi-period path of the oil price; considering a richer menu of driving shocks could address this and might yield more realistic (smaller) required shock sequences. This also relates to the approach of \citet{caravello2024evaluating}, whose counterfactuals build on two complementary inputs, namely conditional forecasts and the causal effects  of policy shocks, which correspond to $\bm{m}$ and blocks of $\bm{M}'$ in our framework.

\begin{figure}[ht]
  \centering
  \includegraphics[width=\textwidth]{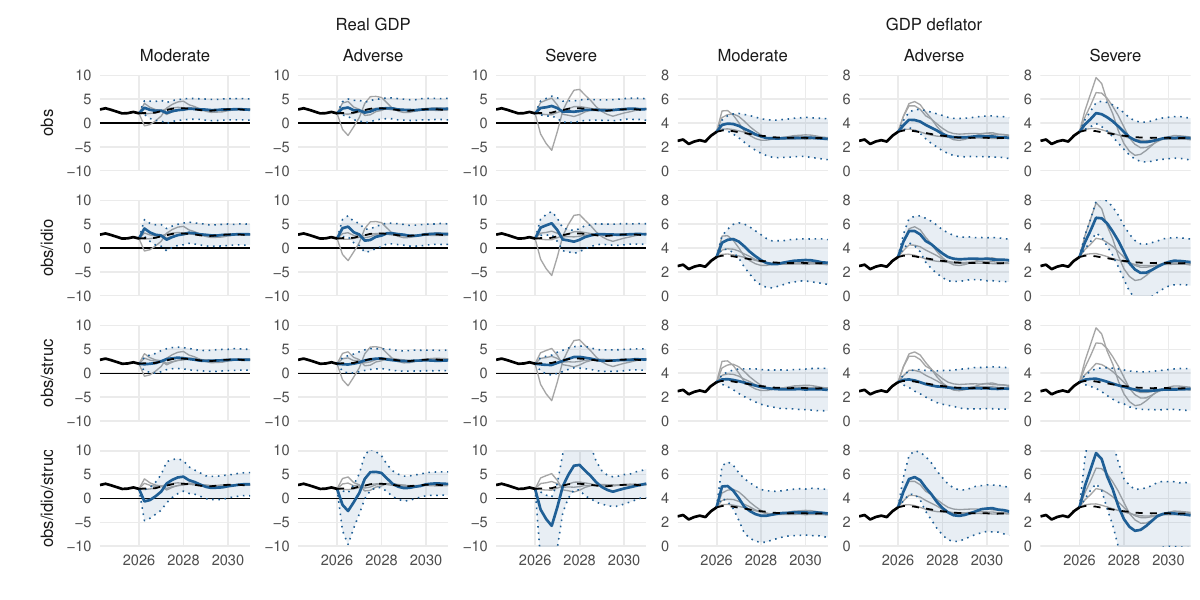}
  \caption{Conditional forecasts of real GDP growth and GDP deflator inflation (year-over-year, in percent) under the alternative conditioning schemes (rows) and scenarios (columns). The solid black line shows the observed data, the dashed black line the median unconditional forecast. The colored line and shaded area are the posterior median and 68\% credible set of the conditioning scheme; the thin grey lines repeat the median conditional forecasts of the other three schemes; y-axis clipped for readability.}
  \label{fig:cf-rest-types}
\end{figure}

To economize on space, we focus on the consequences for two unrestricted observables in Figure \ref{fig:cf-rest-types}, real GDP growth and inflation in the GDP deflator (both year-over-year). Irrespective of the scenario and conditioning scheme, the increase in oil prices is associated with inflationary pressures as measured by the GDP deflator, but the implied magnitudes differ markedly. The pass-through is weakest under \texttt{obs/struc}, where peak median inflation of $3.5$ to $3.6$ percent barely exceeds the unconditional forecast of about $3.3$ percent. It is somewhat stronger under \texttt{obs} (peaks of $4.0$ to $4.8$ percent, depending on the scenario), and most pronounced when the idiosyncratic components are restricted: peak median inflation reaches $4.8$ to $6.5$ percent under \texttt{obs/idio} and $5.0$ to $7.8$ percent under \texttt{obs/idio/struc}. Inflation peaks at or within two quarters of the scenario-specific oil price peak, and the median paths largely return to the unconditional forecast within about two to three years.

The scenario does not result in a contraction of GDP growth unless both the idiosyncratic and the non-driving structural shocks are restricted to their unconditional distribution. Under \texttt{obs} and \texttt{obs/struc}, the counterfactual medians stay within $1.4$ percentage points of the unconditional forecast, and under \texttt{obs/idio} the paths are mildly expansionary on impact followed by a modest slowdown. When the scenario is explicitly tied to a structural oil price shock (\texttt{obs/idio/struc}), by contrast, the counterfactual paths produce a pronounced pattern of stagflation. Median GDP growth troughs at $-0.6$, $-2.6$ and $-5.7$ percent in the \texttt{moderate}, \texttt{adverse} and \texttt{severe} scenarios, respectively, with the trough coinciding with the scenario-specific oil price peak. As the oil price returns to its baseline, growth overshoots before gradually converging back to the unconditional forecast. This conditioning scheme also carries substantially higher predictive uncertainty: at the trough of the \texttt{severe} scenario, the 68\% band for GDP growth spans roughly $20$ percentage points, compared with $4$ to $5$ percentage points under the other schemes and for the unconditional forecast. Certainty about the structural origin of the scenario thus comes at the price of much more diffuse counterfactual predictions.

\section{Conclusions}\label{sec:conclusions}
This paper develops methods for SSA in high-dimensional BVARs with a factor structure on the reduced-form errors. The framework accommodates separate distributional restrictions on observables, structural shocks and idiosyncratic components, and we propose a fast simulation-based algorithm. In an application to oil price scenarios in the context of the 2026 closure of the Strait of Hormuz, we show that restricting the structural origins of a scenario is pertinent. The same stipulated oil price path is consistent with outcomes ranging from a mostly benign, mildly inflationary episode to pronounced stagflation, depending on which types of shocks are permitted to deliver it.

Future research avenues include extensions of the model to include stochastic volatility and other forms of parameter time variation, which would allow predictive uncertainty and transmission dynamics to change over time; conditional on the respective parameter paths, the model remains Gaussian and the algorithms developed in this paper apply with minor modifications. The framework also lends itself to explicit manipulations of impulse responses to design policy counterfactuals, e.g., by computing the conditional forecast of the system initialized at its equilibrium, subjecting it to a single structural shock on impact, and restricting its propagation through selected observables.

% \section*{Declaration of generative AI and AI-assisted technologies}
% During the preparation of this work we used Claude Code (Anthropic) and Codex (OpenAI) for drafting, editing and brainstorming. After using these tools, we reviewed and edited the content as needed and take full responsibility for the final content and accuracy of this paper.

{\setstretch{1.0}\putbib}
\end{bibunit}

\FloatBarrier
\normalsize\clearpage\doublespacing

\begin{appendices}
\begin{center}
{\LARGE\sffamily\textbf{Online Appendix:\\\huge\titletext}}
\end{center}

\setcounter{page}{1}
\renewcommand{\thepage}{A\arabic{page}}
\setcounter{section}{0}
\setcounter{equation}{0}
\setcounter{figure}{0}
\setcounter{table}{0}
\setcounter{footnote}{0}

\renewcommand\thesection{\Alph{section}}
\renewcommand\theequation{\Alph{section}.\arabic{equation}}
\renewcommand\thetable{\Alph{section}.\arabic{table}}
\renewcommand\thefigure{\Alph{section}.\arabic{figure}}

\begin{bibunit}
\section{Technical Appendix}\label{app:technical}
\subsection{Priors, posteriors and computational aspects of the algorithm}\label{app:priorsposteriors}
Models as in (\ref{eq:model-main}) and (\ref{eq:model-error}) have several computational advantages. Using $\bm{y}_t - \bm{L}\bm{\varepsilon}_t = \tilde{\bm{y}}_t = (\tilde{y}_{1t},\hdots,\tilde{y}_{nt})'$, $\bm{A}_{i\bullet}$ is the $i$th row of $\bm{A}$, $\tilde{\bm{x}}_t = (\bm{x}_t',1)'$ and $\bm{\beta}_i = (\bm{A}_{i\bullet}, a_i)'$, we obtain the $i$th VAR equation as: 
\begin{equation*}
    \tilde{y}_{it} = \tilde{\bm{x}}_t'\bm{\beta}_i + s_i u_{it}, \quad u_{it} \sim \mathcal{N}(0,1), \quad i = 1,\hdots,n.
\end{equation*}

We rely on a horseshoe prior, which is among the class of global--local priors. This prior assumes vec$(\bm{A})\sim\mathcal{N}(\bm{0}_k, \tau^2\cdot\diag(\lambda_1^2,\hdots,\lambda_k^2))$. The global parameter $\tau\sim\mathcal{C}^{+}(0,1)$, where $\mathcal{C}^{+}$ is the half-Cauchy distribution, controls the overall amount of shrinkage towards the prior mean, which may be offset by the local scalings $\lambda_j\sim\mathcal{C}^{+}(0,1)$ for $j = 1,\hdots,k$, when the corresponding predictor is important. The resulting prior is $\bm{\beta}_i\sim\mathcal{N}(\bm{0}, \underline{\bm{V}}_{\bm{\beta} i})$, where the diagonal elements of $\underline{\bm{V}}_{\bm{\beta} i}$ are the products of the squared global and local scales ($\tau^2\lambda_j^2$) associated with the $i$th equation, and we add a normal prior with variance $10$ for the intercept. The hyperparameters $\tau$ and $\lambda_j$ are updated with auxiliary inverse-Gamma variables \citep{makalic2015simple}. Draws from these posteriors can be updated equation-by-equation with standard formulas, using $\underline{\bm{y}}_i = (\tilde{y}_{i1},\hdots,\tilde{y}_{iT})'/s_{i}$ and $\underline{\bm{X}}_i = (\tilde{\bm{x}}_1,\hdots,\tilde{\bm{x}}_T)'/s_{i}$, for a conditionally conjugate Gaussian linear regression:
\begin{equation}
    \bm{\beta}_i|\bullet \sim \mathcal{N}(\overline{\bm{V}}_{\bm{\beta}i}\,\underline{\bm{X}}_i'\underline{\bm{y}}_i,\;\overline{\bm{V}}_{\bm{\beta}i}), \quad \overline{\bm{V}}_{\bm{\beta}i} = (\underline{\bm{V}}_{\bm{\beta}i}^{-1} + \underline{\bm{X}}_i'\underline{\bm{X}}_i)^{-1}.\label{eq:VARparas}
\end{equation}
When $nP \gg T$ we follow \citet{kastner2020sparse} and use the fast sampling algorithm of \citet{bhattacharya2016fast} which significantly speeds up the algorithm. A draw is obtained by first sampling $\bm{c}\sim\mathcal{N}(\bm{0}_{nP+1},\underline{\bm{V}}_{\bm{\beta} i})$ and $\bm{d}\sim\mathcal{N}(\bm{0}_{T},\bm{I}_{T})$. Set $\bm{v} = \underline{\bm{X}}_i\bm{c} + \bm{d}$ and compute $\bm{w} = (\underline{\bm{X}}_i\underline{\bm{V}}_{\bm{\beta} i}\underline{\bm{X}}_i' + \bm{I}_T)^{-1}(\underline{\bm{y}}_i - \bm{v})$. A draw from the posterior of $\bm{\beta}_i$ is obtained as $\bm{c} + \underline{\bm{V}}_{\bm{\beta} i}\underline{\bm{X}}_i'\bm{w}$. This ties the respective dominant computational cost to the sample size rather than the number of parameters.

We denote by $\bm{l}_i'$ the $i$th row and $l_{ij}$ the $(i,j)$th element of the matrix $\bm{L}$ with $i = 1, \hdots, n$ and $j = 1, \hdots, q$. The prior on the loadings implements the restrictions $\bm{\mathcal{R}}$ introduced in Section \ref{sec:econometrics}:
\begin{equation*}
    {l}_{ij}\sim
    \begin{cases}
	    \mathcal{N}(m_{L,ij}, v_{L,ij})\cdot\mathbb{I}({l}_{ij} \geq 0), &\text{if } \mathcal{R}_{ij} = +,\\
	    \mathcal{N}(m_{L,ij}, v_{L,ij})\cdot\mathbb{I}(l_{ij}\leq0), &\text{if } \mathcal{R}_{ij} = -,\\
	    \delta_{c_{ij}}, &\text{if } \mathcal{R}_{ij} = c_{ij},\\
	    \mathcal{N}(m_{L,ij}, v_{L,ij}),&\text{if } \mathcal{R}_{ij} = \text{NA},
    \end{cases}
\end{equation*}
where $\delta_{c_{ij}}$ is the point mass at $c_{ij}$; we use weakly informative choices for the prior means $m_{L,ij} = 0$ and variances $v_{L,ij} = 3$. The elements of the lower ($\underline{\bm{l}}_i$) and upper ($\overline{\bm{l}}_i$) bounds are set as follows: if $\mathcal{R}_{ij} = +\rightarrow(0,\infty)$, if $\mathcal{R}_{ij} = -\rightarrow(-\infty,0)$, if $\mathcal{R}_{ij} = \text{NA}\rightarrow(-\infty,\infty)$. Point-mass restrictions $\mathcal{R}_{ij} = c_{ij}$ fix the respective loadings and exclude them from the update (for $c_{ij} = 0$, this is a zero restriction); let $\mathcal{F}_i$ collect the free indices of row $i$ and $\bm{l}_{i,\mathcal{F}_i}$ the corresponding subvector, with the bounds above referring to the free coordinates. Stacking the reduced form error $\eta_{it} = \bm{l}_i'\bm{\varepsilon}_t + s_i u_{it}$ across $t = 1,\hdots,T$, with $\bm{E} = (\bm{\varepsilon}_1,\hdots,\bm{\varepsilon}_T)'$, yields $\bm{\eta}_i = \bm{E}\bm{l}_i + s_i\bm{u}_{i}$. Collecting in $\bm{E}_{\mathcal{F}_i}$ the columns of $\bm{E}$ associated with the free loadings, and adjusting for the contribution of the fixed ones via $\bm{\eta}_i^{c} = \bm{\eta}_i - \bm{E}_{\mathcal{F}_i^{c}}\bm{l}_{i,\mathcal{F}_i^{c}}$ (with $\mathcal{F}_i^{c}$ the fixed indices), using ${\bm{V}_{\bm{l}_i}} = (s_i^{-2}\bm{E}_{\mathcal{F}_i}'\bm{E}_{\mathcal{F}_i} + \underline{\bm{V}}_{L,i}^{-1})^{-1}$ the posterior of the free equation-specific loadings is truncated normal:
\begin{equation}
    \bm{l}_{i,\mathcal{F}_i}|\bullet \sim \mathcal{N}\big(\bm{V}_{\bm{l}_i}(s_i^{-2}\bm{E}_{\mathcal{F}_i}'\bm{\eta}_i^{c} + \underline{\bm{V}}_{L,i}^{-1}\bm{m}_{L,i}),\,{\bm{V}_{\bm{l}_i}}\big)\cdot\mathbb{I}(\underline{\bm{l}}_i < \bm{l}_{i,\mathcal{F}_i} < \overline{\bm{l}}_i),
\end{equation}
where $\bm{m}_{L,i}$ and the diagonal matrix $\underline{\bm{V}}_{L,i}$ collect the prior means and variances of the free loadings. We sample from this multivariate truncated normal distribution using the minimax tilting method of \citet{botev2017normal}.

A sample from the posterior of the factors can be obtained by writing the model in a seemingly unrelated regression form, using $\bm{\varepsilon}_{(1:T)} = (\bm{\varepsilon}_1',\hdots,\bm{\varepsilon}_{T}')'$, $\bm{\eta}_{(1:T)} = (\bm{\eta}_1',\hdots,\bm{\eta}_T')'$, $\bm{u}_{(1:T)} = (\bm{u}_1',\hdots,\bm{u}_T')'$, $\bm{\mathcal{L}}_{(T)} = (\bm{I}_T\otimes \bm{L})$, $\bm{\mathcal{S}}_{(T)}^{1/2} = (\bm{I}_T \otimes \bm{S}^{1/2})$, so that $\bm{\eta}_{(1:T)} = \bm{\mathcal{L}}_{(T)}\bm{\varepsilon}_{(1:T)} + \bm{\mathcal{S}}_{(T)}^{1/2}\bm{u}_{(1:T)}$. The posterior is a normal distribution:
\begin{equation}
    \bm{\varepsilon}_{(1:T)}|\bullet \sim \mathcal{N}(\bm{V}_{\varepsilon}\bm{\mathcal{L}}_{(T)}'\bm{\mathcal{S}}_{(T)}^{-1}\bm{\eta}_{(1:T)}, \bm{V}_{\varepsilon}), \quad \bm{V}_{\varepsilon} = (\bm{\mathcal{L}}_{(T)}'\bm{\mathcal{S}}_{(T)}^{-1}\bm{\mathcal{L}}_{(T)} + \bm{I}_{Tq})^{-1}.
\end{equation}
On the diagonal elements of $\bm{S}$, we impose inverse Gamma priors $s_{i}^2\sim\mathcal{G}^{-1}(a_0,b_0)$ and set $a_0 = 5$ and $b_0 = 0.4$. Since the series are standardized, this prior centers the idiosyncratic share of each series' unit variance at $10$ percent a priori, mildly favoring the common component. Draws for the variances of the idiosyncratic component can be obtained equation-by-equation from textbook inverse Gamma posteriors,
\begin{equation}
    s_i^2|\bullet \sim \mathcal{G}^{-1}\!(a_0 + T/2,\; b_0 + \sum_{t=1}^{T}(\tilde{y}_{it} - \tilde{\bm{x}}_t'\bm{\beta}_i)^2 / 2).\label{eq:idiopost}
\end{equation}

This prior setup allows for Gibbs updates of all parameters, cycling iteratively through the distributions given by (\ref{eq:VARparas}) to (\ref{eq:idiopost}). All results are based on 8{,}000 sweeps of this algorithm; we discard the first 2{,}000 as burn-in and retain every second of the remaining draws, yielding 3{,}000 posterior draws. Note that in our empirical application, we follow \citet{chan2023conditional} and sample from the conditional forecast distributions after running the main sampling algorithm, and thus do not condition any model parameters on the restrictions. The sampling step for the conditional forecasts can straightforwardly be moved into the main estimation algorithm to update the parameters conditional on the restrictions via data augmentation.

\subsection{Static representation of the joint forecast vector}\label{app:staticrep}
The matrix $\bm{H} = (\bm{I}_{nH} - \tilde{\bm{H}})$ encodes the dynamic coefficients while $\bm{h} = \tilde{\bm{h}} + (\bm{\iota}_{H}\otimes\bm{a})$ collects the predetermined terms (initial conditions and intercepts). Specifically, the required vectors and matrices are defined as:
\begin{equation*}
    \tilde{\bm{h}} = 
    \begin{bmatrix}
    \sum_{p=1}^P \bm{A}_p\bm{y}_{T+1-p}\\
    \sum_{p=2}^P \bm{A}_p\bm{y}_{T+2-p}\\
    \sum_{p=3}^P \bm{A}_p\bm{y}_{T+3-p}\\
    \vdots\\
    \bm{A}_P\bm{y}_T\\
    \bm{0}_n\\
    \vdots\\
    \bm{0}_n
    \end{bmatrix}, 
    \quad 
    \tilde{\bm{H}} = \begin{bmatrix}
\bm 0_{n\times n} & \cdots & \cdots & \cdots & \cdots & \cdots & \cdots & \bm 0_{n\times n}\\
\bm A_1 & \bm 0_{n\times n} & \cdots & \cdots & \cdots & \cdots & \cdots & \bm 0_{n\times n}\\
\bm A_2 & \bm A_1 & \bm 0_{n\times n} & \cdots & \cdots & \cdots & \cdots & \bm 0_{n\times n}\\
\vdots & \ddots & \ddots & \ddots & & & & \vdots\\
\bm A_{P} & \cdots & \bm A_2 & \bm A_1 & \bm 0_{n\times n} & \cdots & \cdots & \bm 0_{n\times n}\\
\bm 0_{n\times n} & \ddots &  & \ddots & \ddots & \ddots &  & \vdots\\
\vdots & \ddots & \ddots &  & \ddots & \ddots & \ddots & \vdots\\
\bm 0_{n\times n} & \cdots & \bm 0_{n\times n} & \bm A_P & \cdots & \bm A_2 & \bm A_1 & \bm 0_{n\times n}
\end{bmatrix}.
\end{equation*}
Note that in case $H < P$ one may subset $\bm{H}$ and $\bm{h}$ to their respective upper $nH$ blocks. It can be shown (see below) that the inversion results again in a lower triangular matrix, with a specific structure:
\begin{equation*}
\bm{H}^{-1} = \begin{bmatrix}
\bm{\Phi}_0 & \bm{0}_n & \cdots & \cdots & \bm{0}_n \\
\bm{\Phi}_1 & \bm{\Phi}_0 & \ddots & & \vdots \\
\bm{\Phi}_2 & \bm{\Phi}_1 & \bm{\Phi}_0 & \ddots & \vdots \\
\vdots & \ddots & \ddots & \ddots & \bm{0}_n \\
\bm{\Phi}_{H-1} & \cdots & \bm{\Phi}_2 & \bm{\Phi}_1 & \bm{\Phi}_0
\end{bmatrix}
\end{equation*}
where $\bm{\Phi}_0 = \bm{I}_n$. That is, this matrix collects the dynamic coefficients of the vector moving average representation on its main diagonal and subdiagonals --- which measure the dynamic response and propagation of the system to a shock.

Since $\bm{H}$ is lower block-triangular with $\bm{I}_n$ on its main block diagonal, $\det(\bm{H}) = \prod_{i=1}^{H} \det(\bm{I}_n) = 1$, and thus invertible. To establish the structure of $\bm{H}^{-1}$, solve $\bm{H}\bm{x} = \bm{E}_j$ block by block via forward substitution, where $\bm{E}_j = [\bm{0}_{n\times n(j-1)}, \bm{I}_n, \bm{0}_{n\times n(H-j)}]'$ is an $nH \times n$ matrix. For $i = 1$ the sum is empty and the equation reduces to $\bm{x}_1 = \mathbb{I}(1 = j)\cdot\bm{I}_n$. For $i > 1$, the $i$th block equation reads
\begin{equation*}
    \bm{x}_i - \sum_{p=1}^{\min(P,\,i-1)} \bm{A}_p\,\bm{x}_{i-p} = \mathbb{I}(i = j)\cdot\bm{I}_n.
\end{equation*}
For $i < j$, forward substitution gives $\bm{x}_i = \bm{0}_n$. For $i = j$, $\bm{x}_j = \bm{I}_n = \bm{\Phi}_0$. For $i > j$, the recursion
\begin{equation*}
    \bm{x}_i = \sum_{p=1}^{\min(P,\,i-j)} \bm{A}_p\,\bm{x}_{i-p}
\end{equation*}
coincides with the VMA recursion $\bm{\Phi}_s = \sum_{p=1}^{\min(P,s)} \bm{A}_p\,\bm{\Phi}_{s-p}$, so that $\bm{x}_i = \bm{\Phi}_{i-j}$ follows by induction. The columns of $\bm{H}^{-1}$ are thus given by the solutions $\bm{x}$. In practice, our \texttt{R}-code never explicitly forms $\bm{H}^{-1}$ or the full matrix $\bm{M}'$ but constructs them recursively, which is algebraically equivalent and offers substantial computational gains.

\subsection{The restricted innovation distribution}\label{app:restricted-innovations}
Define $\bm{\xi}:=\bm{\xi}_{T+(1:H)}$ and $d:=(q+n)H$, with unconditional distribution $\bm{\xi}\sim\mathcal{N}(\bm{0},\bm{I}_d)$ by the assumptions in (\ref{eq:model-error}). The restriction matrix $\bm{R}$ is $r\times d$ with full row rank, so $\bm{R}^{+}=\bm{R}'(\bm{R}\bm{R}')^{-1}$ is a right inverse, $\bm{R}\bm{R}^{+}=\bm{I}_r$, and $\bm{P}:=\bm{R}^{+}\bm{R}$ is the orthogonal projection matrix onto the row space of $\bm{R}$; it is symmetric and idempotent. Split $\bm{\xi}$ into its row-space and null-space parts, and write $\bm{w}$ for the $r$-dimensional restricted combination,
\begin{equation*}
  \bm{\xi}=\bm{P}\bm{\xi}+\bm{\xi}_\perp,
  \qquad \bm{\xi}_\perp:=(\bm{I}_d-\bm{P})\bm{\xi},
  \qquad \bm{w}:=\bm{R}\bm{\xi} \sim \mathcal{N}(\bm{0},\bm{R}\bm{R}').
\end{equation*}
Since $\bm{R}\bm{P}=\bm{R}$, we have $\bm{w}=\bm{R}(\bm{P}\bm{\xi})$, $\bm{P}\bm{\xi}=\bm{R}^{+}\bm{w}$, and conditioning on $\bm{R}\bm{\xi}$ is equivalent to conditioning on $\bm{P}\bm{\xi}$, while $\bm{\xi}_\perp$ carries the remaining variation. The two parts are jointly Gaussian with covariance $\bm{P}(\bm{I}_d-\bm{P})=\bm{0}$, hence independent, with $\bm{\xi}_\perp\sim\mathcal{N}(\bm{0},\bm{I}_d-\bm{P})$. Given the restricted combination we therefore obtain $p(\bm{\xi}\mid\bm{w})=\mathcal{N}\!\big(\bm{R}^{+}\bm{w},\ \bm{I}_d-\bm{P}\big)$, or equivalently $\bm{\xi}=\bm{R}^{+}\bm{w}+\bm{\xi}_\perp$. The revision replaces the marginal distribution of $\bm{w}$ with $\bm{w}^{\ast}\sim\mathcal{N}(\bm{r},\bm{\Omega})$, independent of $\bm{\xi}_\perp$, while leaving this conditional unchanged, which defines the restricted innovations $\bm{\xi}^{\ast}=\bm{R}^{+}\bm{w}^{\ast}+\bm{\xi}_\perp$. Hence $\bm{\xi}^{\ast}\sim\mathcal{N}(\bm{\mu}_\xi,\bm{V}_\xi)$ with $\bm{\mu}_\xi=\bm{R}^{+}\bm{r}$ and $\bm{V}_\xi=(\bm{I}_d-\bm{P})+\bm{R}^{+}\bm{\Omega}\bm{R}^{+\prime}$. By construction, using $\bm{R}(\bm{I}_d-\bm{P})=\bm{0}$, we have $\bm{R}\bm{\mu}_\xi=\bm{r}$ and $\bm{R}\bm{V}_\xi\bm{R}'=\bm{\Omega}$.

When treating the restrictions as noisy observations, the posterior variance $\tilde{\bm{V}}_\xi = (\bm{I}_{d} + \bm{R}'\bm{\Omega}^{-1}\bm{R})^{-1}$  can be rewritten by applying the Woodbury matrix identity, which yields: $\tilde{\bm{V}}_\xi = \bm{I}_d - \bm{R}'(\bm{\Omega} + \bm{R}\bm{R}')^{-1}\bm{R}$. Substituting the Woodbury form of $\tilde{\bm{V}}_\xi$ into the posterior mean $\tilde{\bm{\mu}}_\xi = \tilde{\bm{V}}_\xi\bm{R}'\bm{\Omega}^{-1}\bm{r}$ yields $\tilde{\bm{\mu}}_\xi = \bm{R}'\left(\bm{\Omega}^{-1} - (\bm{\Omega}+\bm{R}\bm{R}')^{-1}\bm{R}\bm{R}'\bm{\Omega}^{-1}\right)\bm{r} = \bm{R}'(\bm{\Omega} + \bm{R}\bm{R}')^{-1}\bm{r}$. For $\bm{\Omega} = \bm{0}$, i.e., under hard restrictions, these are the same moments as shown in \citet{waggoner1999conditional} and identical to those in (\ref{eq:restrmoments}).

\inlinehead{Sampling the restricted innovations} Drawing directly from $\mathcal{N}(\bm{\mu}_\xi,\bm{V}_\xi)$ requires factorizing the $d\times d$ matrix $\bm{V}_\xi$. Algorithm~\ref{alg:xi_post} samples from the same distribution using only the $r\times r$ matrices $\bm{R}\bm{R}'$ and $\bm{\Omega}$. We use $\bm{R}^{+}$ and $\bm{P}$ as defined above; Proposition \ref{prop:algorithm1} formalizes the properties of the algorithm.

\begin{proposition}\label{prop:algorithm1}
    Draw $\bm{\xi}_0\sim\mathcal{N}(\bm{0},\bm{I}_d)$ and $\bm{v}_0\sim\mathcal{N}(\bm{0},\bm{\Omega})$ independently. Then
    \begin{equation}\label{eq:algorithm1}
        \bm{\xi}^{\ast}=\bm{\xi}_0+\bm{R}'(\bm{R}\bm{R}')^{-1}(\bm{r}+\bm{v}_0-\bm{R}\bm{\xi}_0)
    \end{equation}
    satisfies $\bm{\xi}^{\ast}\sim\mathcal{N}(\bm{\mu}_\xi,\bm{V}_\xi)$, and the restrictions hold by construction, $\bm{R}\bm{\xi}^{\ast}=(\bm{r}+\bm{v}_0)\sim\mathcal{N}(\bm{r},\bm{\Omega})$.
\end{proposition}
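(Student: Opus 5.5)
The plan is to rewrite the draw in terms of the projection $\bm{P}=\bm{R}^{+}\bm{R}$ and then read off its distribution from linearity of Gaussians. Substituting $\bm{R}^{+}=\bm{R}'(\bm{R}\bm{R}')^{-1}$ into (\ref{eq:algorithm1}) gives
\begin{equation*}
  \bm{\xi}^{\ast}=(\bm{I}_d-\bm{P})\bm{\xi}_0+\bm{R}^{+}(\bm{r}+\bm{v}_0).
\end{equation*}
So $\bm{\xi}^{\ast}$ is an affine function of the jointly Gaussian vector $(\bm{\xi}_0',\bm{v}_0')'$, whose two blocks are independent by construction. It follows that $\bm{\xi}^{\ast}$ is Gaussian, and only its first two moments need to be computed.

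For the mean, both $\bm{\xi}_0$ and $\bm{v}_0$ have zero mean, so $\mathbb{E}(\bm{\xi}^{\ast})=\bm{R}^{+}\bm{r}=\bm{\mu}_\xi$. For the covariance, independence of the two blocks removes the cross terms. The first term then contributes $(\bm{I}_d-\bm{P})\bm{I}_d(\bm{I}_d-\bm{P})'=\bm{I}_d-\bm{P}$, because $\bm{P}$ is symmetric and idempotent, as established in Section \ref{app:restricted-innovations}. The second term contributes $\bm{R}^{+}\bm{\Omega}\bm{R}^{+\prime}$. Adding the two gives exactly $\bm{V}_\xi$ in (\ref{eq:restrmoments}). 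An equivalent way to present this is to note that $(\bm{I}_d-\bm{P})\bm{\xi}_0$ has the same law as $\bm{\xi}_\perp$ in the construction of Section \ref{app:restricted-innovations}, and that $\bm{w}^{\ast}=\bm{r}+\bm{v}_0\sim\mathcal{N}(\bm{r},\bm{\Omega})$ is independent of it. Then $\bm{\xi}^{\ast}=\bm{R}^{+}\bm{w}^{\ast}+\bm{\xi}_\perp$ in distribution, which is precisely the defining representation of the restricted innovations.

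For the restrictions, premultiply the rewritten expression by $\bm{R}$ and use $\bm{R}\bm{R}^{+}=\bm{I}_r$ and $\bm{R}(\bm{I}_d-\bm{P})=\bm{0}$. This gives $\bm{R}\bm{\xi}^{\ast}=\bm{r}+\bm{v}_0$ pathwise, which is distributed $\mathcal{N}(\bm{r},\bm{\Omega})$.

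No step is a real obstacle; the argument is routine linear algebra. The one point needing care is that $\bm{\Omega}$ may be singular, as in hard constraints with $\bm{\Omega}=\bm{0}$, so $\bm{V}_\xi$ may be singular too. I will therefore state normality in the sense of degenerate Gaussians, meaning affine images of standard normals, and avoid any argument that relies on densities.
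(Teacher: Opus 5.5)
Your proposal is correct and follows essentially the same route as the paper: you rewrite the draw as $\bm{R}^{+}(\bm{r}+\bm{v}_0)+(\bm{I}_d-\bm{P})\bm{\xi}_0$, identify it with the representation $\bm{R}^{+}\bm{w}^{\ast}+\bm{\xi}_\perp$, obtain the moments from independence and idempotence of $\bm{I}_d-\bm{P}$, and check $\bm{R}\bm{\xi}^{\ast}=\bm{r}+\bm{v}_0$ pathwise. Your remark that normality should be read in the degenerate-Gaussian sense when $\bm{\Omega}$ is singular (e.g., $\bm{\Omega}=\bm{0}$) is a small, welcome addition that the paper leaves implicit.
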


\begin{proof}
Using $\bm{R}^{+}=\bm{R}'(\bm{R}\bm{R}')^{-1}$ and $\bm{P}=\bm{R}^{+}\bm{R}$, the draw \eqref{eq:algorithm1} rearranges to
\begin{equation}
    \bm{\xi}^{\ast}=\bm{R}^{+}(\bm{r}+\bm{v}_0)+(\bm{I}_d-\bm{P})\bm{\xi}_0.\label{eq:split-xi}
\end{equation}
Setting $\bm{w}^{\ast}:=(\bm{r}+\bm{v}_0)\sim\mathcal{N}(\bm{r},\bm{\Omega})$ and $\bm{\xi}_\perp:=(\bm{I}_d-\bm{P})\bm{\xi}_0\sim\mathcal{N}(\bm{0},\bm{I}_d-\bm{P})$, with $\bm{w}^{\ast}\perp\bm{\xi}_\perp$ since $\bm{v}_0\perp\bm{\xi}_0$, this is exactly the representation $\bm{\xi}^{\ast}=\bm{R}^{+}\bm{w}^{\ast}+\bm{\xi}_\perp$ derived above. Hence $\bm{\xi}^{\ast}$ is Gaussian, being a linear combination of jointly Gaussian variates, with
\begin{align*}
    \mathbb{E}(\bm{\xi}^{\ast}) &= \bm{R}^{+}\bm{r} = \bm{\mu}_\xi, \\
    \mathrm{Var}(\bm{\xi}^{\ast}) &= (\bm{I}_d-\bm{P})^2 + \bm{R}^{+}\bm{\Omega}\bm{R}^{+\prime}
      = (\bm{I}_d-\bm{P}) + \bm{R}^{+}\bm{\Omega}\bm{R}^{+\prime} = \bm{V}_\xi,
\end{align*}
using independence of $\bm{\xi}_0$ and $\bm{v}_0$ and idempotence of $\bm{I}_d-\bm{P}$. Finally,
$\bm{R}\bm{\xi}^{\ast}=\bm{R}\bm{\xi}_0+\bm{R}\bm{R}'(\bm{R}\bm{R}')^{-1}(\bm{r}+\bm{v}_0-\bm{R}\bm{\xi}_0)
=(\bm{r}+\bm{v}_0)$, which is $\mathcal{N}(\bm{r},\bm{\Omega})$ by construction.
\end{proof}

Note that this algorithm is related to \citet{jarocinski2010conditional} who uses an SVD of the matrix $\bm{R}$, instead of orthogonal projections, to speed up the computations described in \citet{waggoner1999conditional}. The draw $\bm{v}_0$ takes the role of the Monte Carlo draw from the distribution of the soft constraints in \citet{waggoner1999conditional}: the realized target $\bm{w}^{\ast} = \bm{r} + \bm{v}_0$ is imposed exactly, which guarantees $\bm{R}\bm{\xi}^{\ast} = \bm{w}^{\ast} \sim \mathcal{N}(\bm{r},\bm{\Omega})$ by construction.

\clearpage
\section{Empirical Appendix}\label{app:empirical}
\subsection{Competing algorithms and computation times}
We benchmark the per-draw cost of Algorithm~\ref{alg:xi_post} against four competing samplers of the same conditional forecast distribution. So that all methods are applicable, the comparison is conducted in the canonical SVAR special case of our framework ($q = n$ structural shocks and no idiosyncratic components), for which the competing algorithms were originally designed. The experimental grid varies the number of variables $n \in \{10, 20, 50\}$, the lag order $P \in \{4, 12, 24\}$, and the forecast horizon $H \in \{8, 12, 24\}$. 

Soft restrictions are imposed on the observables by conditioning a fixed share of the variables at every horizon, so that the restriction ``density'' $r/(nH)$ takes values of $10$, $30$, $50$, and $80$ percent; this spans settings from a small number of conditioning assumptions to scenarios that pin down most of the joint forecast distribution. Costs are accounted for on a per-iteration basis, mimicking use with an MCMC sampler; the total cost per draw is the sum of a construction step, which must be repeated for every draw of the VAR parameters (the baseline forecast and impulse response or multiplier objects), and the conditioning step, which produces the conditional forecast draw itself. All methods run in \texttt{R 4.5.2}, in a single session on identical hardware (MacBook Air, M1). Figure \ref{fig:comp-time} reports total times per draw relative to Algorithm~\ref{alg:xi_post}, whose row shows absolute numbers in seconds per 1,000 draws.

\inlinehead{Implementation details} The labels in Figure \ref{fig:comp-time} refer to the following samplers. ``Joint (shocks)'' and ``Joint (obs.)'' refer to a direct implementation of the formulas in \citet{antolin2021structural}. These compute the Gaussian moments explicitly and factorize a dense $nH \times nH$ covariance matrix for every draw, an $\mathcal{O}((nH)^3)$ operation, in the space of structural shocks (their equations 10--12) and observables (equations 13--14), respectively. ``SVD'' is the algorithm of \citet{jarocinski2010conditional}, which accelerates \citet{waggoner1999conditional} by taking a singular value decomposition (SVD) of the $r \times nH$ restriction map, obtaining the conditional mean through the pseudo-inverse in factored form and the stochastic component from a draw in the $(nH - r)$-dimensional null space; its per-draw cost is of order $\mathcal{O}(r(nH)^2)$. ``Precision'' is the sampler of \citet{chan2023conditional}, which partitions the stacked observable vector into restricted and free coordinates and samples the free block from its Gaussian conditional using the sparse precision matrix of the stacked system; each draw requires one sparse factorization of the free-block precision of dimension $nH - r$. 

We took care to make this comparison as fair as possible. That is, no method unnecessarily forms an explicit matrix inverse or the full inverse lag polynomial; whenever several algebraically equivalent construction routes exist (recursive, companion-form, and joint stacked representations), each method takes the fastest available route. For ``Precision,'' since the sparsity patterns of the precision matrix and of its free-block submatrix are fixed across draws, the sparse Cholesky factorization is initialized only once and the free block is assembled by a single vectorized copy per draw; each draw then merely updates the numeric values of the factor, which serves both the conditional mean and the simulation step. Analogously, since the sparsity structure of the restriction map of Algorithm~\ref{alg:xi_post} is fixed across draws, its assembly indices are computed once per chain, and each draw fills the map with a single vectorized copy. This bookkeeping matters because language-level overhead --- e.g., method dispatch and object validation for the sparse-matrix classes in \texttt{R} --- can dominate the actual floating-point work in small systems; we profiled all implementations and removed such overhead wherever the algorithm permits it, so that measured differences reflect each sampler's linear algebra rather than implementation artifacts.

\begin{figure}[ht]
  \centering
  \includegraphics[width=\textwidth]{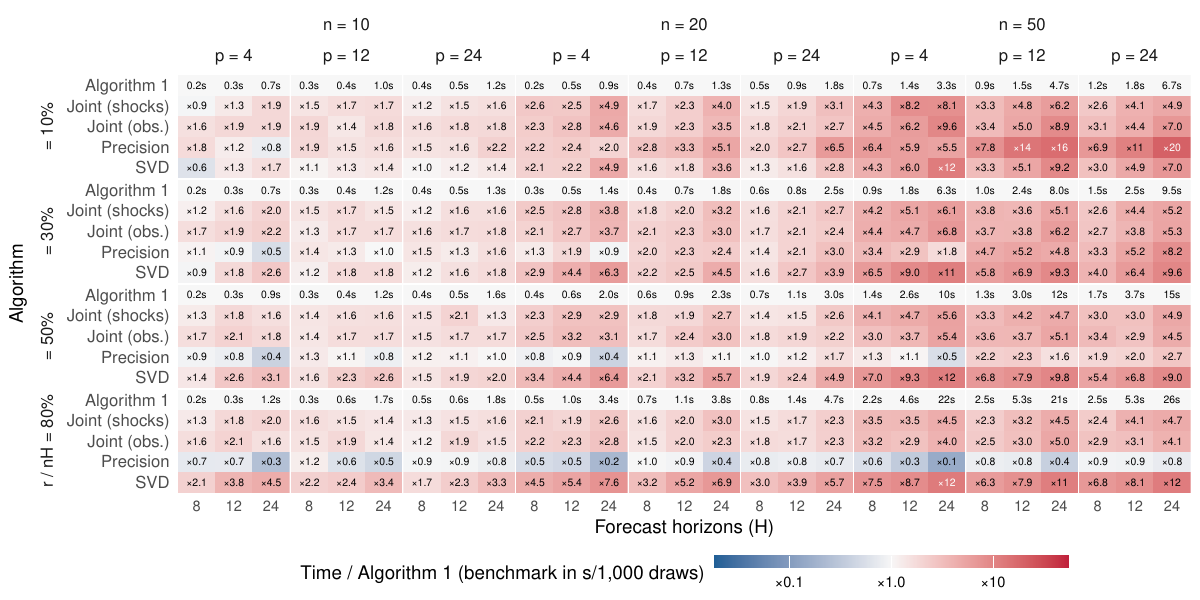}
  \caption{Computation times of competing conditional forecasting algorithms.}
  \label{fig:comp-time}
  \fnote{Cells show timings relative to Algorithm~\ref{alg:xi_post} (red: slower, blue: faster, log scale); the row for Algorithm~\ref{alg:xi_post} reports absolute benchmarks in seconds per 1,000 draws. Panels vary the number of variables $n$ and lag order $P$; the horizontal axis varies the forecast horizon $H$; panel rows vary the restriction density $r/(nH)$.}
\end{figure}

\inlinehead{Computation times} The relative rankings in Figure \ref{fig:comp-time} follow directly from the dominant factorization of each algorithm. Algorithm~\ref{alg:xi_post} factorizes an $r \times r$ matrix, at a cost of order $\mathcal{O}(r^2 nH + r^3)$ that increases with the number of restrictions, whereas the precision sampler factorizes the complementary free block of dimension $nH - r$, whose cost falls as more of the system is restricted. The two methods are therefore complements rather than substitutes. For sparse and moderately dense restriction sets, Algorithm~\ref{alg:xi_post} dominates, with the advantage over the precision sampler widening in the size of the system; at $50$ percent density the two methods are roughly on par, and once most of the system is restricted the ranking flips, with the precision sampler fastest in nearly the entire $80$ percent row. The lag order reinforces this pattern. It enters Algorithm~\ref{alg:xi_post} only through the construction of the impulse responses (the conditioning step is invariant to $P$), whereas the bandwidth of the stacked precision matrix grows with $nP$, so that for long lag orders the banded matrix $\bm{H}$ becomes denser, and the precision-based approach of \citet{chan2023conditional} slows down markedly; this is visible in the figure as its relative cost deteriorates across the $P$ panels. 

The joint approaches behave as expected given that the $\mathcal{O}((nH)^3)$ factorization does not depend on the number of restrictions: their relative position is nearly flat across density rows, narrowing only slightly at high densities as the cost of Algorithm~\ref{alg:xi_post} grows with $r$, and the two variants perform very similarly, since dense moment products and the factorization of the same order dominate both. The SVD approach sits in between: its cost grows with $r(nH)^2$, so it is competitive in small and sparsely restricted settings but falls behind at scale. A caveat applies to reading the small-system cells: absolute differences there are on the order of a millisecond per draw or less, so the ranking in those cells is of little practical consequence. Overall, the differences across methods are modest in small systems but large at scale: in the biggest sparsely restricted configurations, Algorithm~\ref{alg:xi_post} outpaces the joint samplers by factors of five to ten, the SVD sampler by up to twelve, and the precision sampler by up to twenty. Within our benchmark grid, it is the fastest option in the empirically relevant larger-system cells of the sparse-to-moderate density range, while the precision sampler takes over in the most densely restricted settings.

\subsection{Data and additional empirical results}
Table \ref{tab:variables} lists the variables used in the empirical application, alongside series mnemonics (FRED-QD where available; constructed series are described in the table notes) and transformations. This information set corresponds to the one used in \citet{crump2021large}, which we use with the sign restrictions of \citet{arias2026large,chan2025largesign}. Most variables enter in year-over-year growth rates except those already in percent. Note that the oil price enters in log-levels, because the scenarios are stated in terms of the nominal price of oil and encoding them for a differenced series would be more complicated.

% Auto-generated by out_tables.R -- do not edit by hand.
% Sources: prep_data.R (vinfo), set_rest.R (L_sign_ID).

\begin{table}[htbp]
  \centering
  \caption{Variables, transformations, and sign/ranking restrictions}
  \label{tab:variables}
  \scriptsize
  \begin{threeparttable}
  \begin{tabular*}{\textwidth}{@{\extracolsep{\fill}} l p{4.8cm} *{10}{c}}
    \toprule
    \textbf{Code} & \textbf{Description (Tr.)} & (\textbf{a}) & (\textbf{b}) & (\textbf{c}) & (\textbf{d}) & (\textbf{e}) & (\textbf{f}) & (\textbf{g}) & (\textbf{h}) & (\textbf{i}) & (\textbf{j}) \\
    \midrule
    \texttt{GDPC1} & Real GDP (1) & $+$ & $+$ & $+$ & $-$ & $+$ & $+$ & $+$ & $+$ & $-$ & $+$ \\
    \texttt{PCECC96} & Real PCE (1) & $+$ &  &  &  &  & $+$ &  &  & $-$ & $+$ \\
    \texttt{PRFIx} & Residential investment (1) &  & $+$ &  &  &  &  &  &  &  & $+$ \\
    \texttt{PNFIx} & Nonresidential investment (1) &  & $+$ &  &  &  & $+$ &  &  & $-$ & $+$ \\
    \texttt{EXPGSC1} & Exports (1) &  &  &  &  &  &  &  &  &  &  \\
    \texttt{IMPGSC1} & Imports (1) &  &  &  &  &  &  &  &  &  &  \\
    \texttt{GCEC1} & Govt consumption \& investment (1) &  &  &  &  & $+$ &  &  &  &  &  \\
    \texttt{GDPCTPI} & GDP deflator (1) & $+$ & $+$ & $+$ & $-$ & $+$ & $-$ & $-$ & $-$ & $+$ & $+$ \\
    \texttt{PCECTPI} & PCE deflator (1) & $+$ & $+$ & $+$ & $-$ & $+$ & $-$ & $-$ & $-$ & $+$ & $+$ \\
    \texttt{PCEPILFE} & Core PCE deflator (1) & $+$ & $+$ & $+$ & $-$ & $+$ & $-$ & $-$ & $-$ & $+$ & $+$ \\
    \texttt{CPIAUCSL} & CPI (1) & $+$ & $+$ & $+$ & $-$ & $+$ & $-$ & $-$ & $-$ & $+$ & $+$ \\
    \texttt{CPILFESL} & Core CPI (1) & $+$ & $+$ & $+$ & $-$ & $+$ & $-$ & $-$ & $-$ & $+$ & $+$ \\
    \texttt{AHETPIx} & Avg hourly earnings (1) &  &  &  &  &  & $+$ & $-$ & $-$ & $-$ &  \\
    \texttt{OPHNFB} & Labor productivity (NFB) (1) &  &  &  &  &  & $+$ &  &  & $-$ &  \\
    \texttt{PAYEMS} & Nonfarm payrolls (1) & $+$ & $+$ & $+$ & $-$ & $+$ & $+$ & $-$ & $+$ & $-$ & $-$ \\
    \texttt{INDPRO} & Industrial production (1) & $+$ & $+$ & $+$ & $-$ &  &  &  &  &  &  \\
    \texttt{CUMFNS} & Capacity utilization (2) & $+$ & $+$ & $+$ & $-$ &  &  &  &  &  &  \\
    \texttt{TFP} & Util-adjusted TFP (Fernald) (1) &  &  &  &  &  & $+$ &  &  & $-$ &  \\
    \texttt{HOUST} & Housing starts (1) &  &  &  &  &  &  &  &  &  &  \\
    \texttt{DPIC96} & Real disposable income (1) &  &  &  &  &  &  &  &  &  &  \\
    \texttt{UMCSENTx} & Consumer sentiment (2) &  &  &  &  &  &  &  &  &  &  \\
    \texttt{FEDFUNDS} & Federal funds rate (2) & $+$ & $+$ & $+$ & $+$ & $+$ &  &  &  &  &  \\
    \texttt{TB3MS} & 3-month T-bill rate (2) & $+$ & $+$ & $+$ & $+$ & $+$ &  &  &  &  &  \\
    \texttt{GS5} & 5-year Treasury note (2) &  &  &  & $+$ &  &  &  &  &  &  \\
    \texttt{GS10} & 10-year Treasury bond (2) &  &  &  & $+$ &  &  &  &  &  &  \\
    \texttt{AAASPR} & Moody's AAA spread (vs. FFR) (2) &  &  &  & $+$ &  &  &  &  &  &  \\
    \texttt{BAASPR} & Moody's BAA spread (vs. FFR) (2) &  &  &  & $+$ &  &  &  &  &  &  \\
    \texttt{MPRIME} & Prime rate (2) & $+$ & $+$ & $+$ & $+$ & $+$ &  &  &  &  &  \\
    \texttt{SP500} & S\&P 500 (1) &  & $-$ & $+$ & $-$ &  &  &  &  &  & $+$ \\
    \texttt{EXUSUKx} & USD/GBP exchange rate (1) &  &  &  &  &  &  &  &  &  &  \\
    \texttt{OILPRICEx} & Oil price (3) &  &  &  &  &  & $-$ &  &  & $+$ &  \\
    \texttt{PNFI\_GDP} & Nonres. inv / GDP (2) & $-$ & $+$ & $+$ &  &  &  &  &  &  &  \\
    \texttt{GCEC\_GDP} & Govt spending / GDP (2) & $-$ & $-$ & $-$ &  & $+$ &  &  &  &  &  \\
    \bottomrule
  \end{tabular*}
  \begin{tablenotes}[flushleft]
    \tiny
    \item[] $+$ ($-$) denotes a positive (negative) sign restriction on the corresponding factor loading (equivalently, the sign of the variable's impact response to the shock); empty cells indicate no restriction. The last two rows are ranking-restriction auxiliaries (relative responses): a positive (negative) restriction on these constructed ratio series requires the numerator to respond more (less) strongly than GDP to the respective shock. The monetary-policy restrictions on \texttt{AAASPR}/\texttt{BAASPR} apply to corporate bond spreads over the federal funds rate, which widen after a contractionary monetary policy shock. \texttt{TFP} is Fernald's utilization-adjusted TFP; \texttt{AAASPR}/\texttt{BAASPR} are Moody's AAA/BAA yields minus the federal funds rate; \texttt{PNFI\_GDP} and \texttt{GCEC\_GDP} are $100\times$ ratios to real GDP. All other series are FRED-QD mnemonics. \textit{Transformations:} (1)~$100\,\Delta_4\log(x)$, year-over-year \% growth; (2)~level (rates, indices, and ratios already in \%); (3)~$100\log(x)$, log-level. \textit{Shocks:} (\textbf{a})~=~Demand; (\textbf{b})~=~Investment; (\textbf{c})~=~Financial; (\textbf{d})~=~Monetary policy; (\textbf{e})~=~Government spending; (\textbf{f})~=~Technology; (\textbf{g})~=~Labor supply; (\textbf{h})~=~Wage bargaining; (\textbf{i})~=~Oil price; (\textbf{j})~=~Consumer sentiment.
  \end{tablenotes}
  \end{threeparttable}
\end{table}

Impulse response functions of selected variables, computed as described in Section \ref{sec:econometrics}, are shown in Figure \ref{fig:irf-appendix}. The figure shows the responses of selected variables to all structural shocks. The responses are computed as the median of the posterior draws, with 68\% and 90\% credible intervals. They are qualitatively similar to those of the closely related work of \citet{chan2025largesign} and \citet{arias2026large}.
\begin{figure}[htbp]
  \centering
  \includegraphics[width=\textwidth]{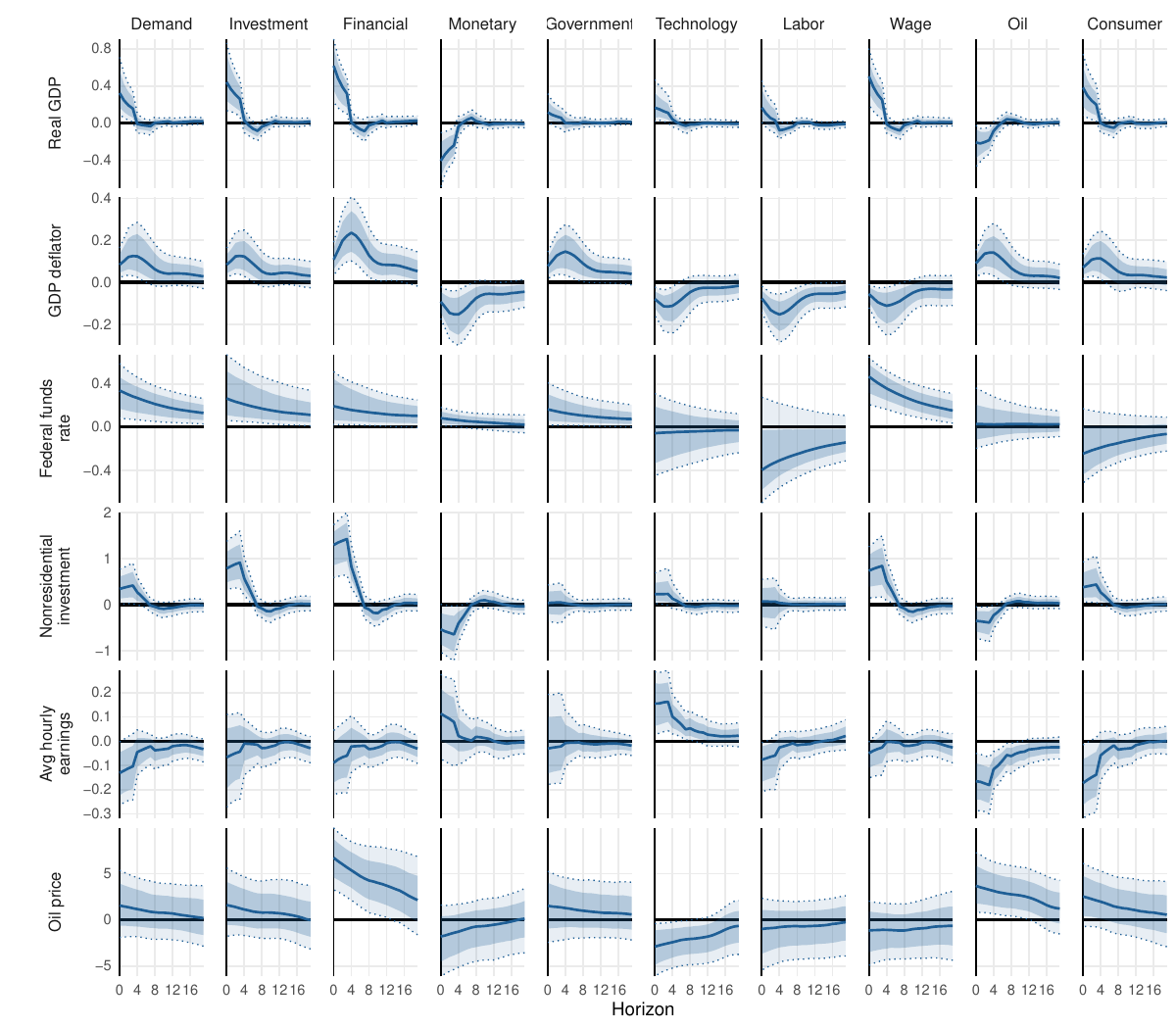}
  \caption{Impulse responses of selected variables to all structural shocks.}
  \label{fig:irf-appendix}
\end{figure}

{\setstretch{1.2}\putbib}
\clearpage
\end{bibunit}
\end{appendices}

\end{document}